\theoremstyle{remark}
\newtheorem*{cor}{Corollary}
\theoremstyle{definition}
\newtheorem*{defn}{Definition}
\theoremstyle{plain}
\newtheorem*{lem}{Lemma}
\begin{document}
\begin{center}
\Large
{\bf Local unitary invariants for multipartite quantum systems}
\end{center}
\vspace*{-.1cm}
\begin{center}
P\'eter Vrana
\end{center}
\vspace*{-.4cm} \normalsize
\begin{center}
Department of Theoretical Physics, Institute of Physics, Budapest University of\\ Technology and Economics, H-1521 Budapest, Hungary

\vspace*{.2cm}
(\today)
\end{center}

\begin{abstract}
A method is presented to obtain local unitary invariants for multipartite
quantum systems consisting of fermions or distinguishable particles. The
invariants are organized into infinite families, in particular, the generalization
to higher dimensional single particle Hilbert spaces is straightforward.
Many well-known invariants and their generalizations are also included.
\end{abstract}

\section{Introduction}

The possibility of entanglement between subsystems is a purely quantum mechanical
phenomenon, related to the nonlocal nature of the fundamental laws governing our
world. In order to understand quantum entanglement, one constructs functions on
the space of quantum states invariant under groups modelling local operations.
The two main approaches consider local unitary (LU) operations which describe
local transformations that can be applied with probability one, or the SLOCC group
(stochastic local operations and classical communication) corresponding to
transformations that can be done with nonzero probability\cite{Dur}.

Various results exist for both groups acting on composite quantum systems with
distinguishable constituents\cite{Virmani,Horodecki,Bengtsson}, but much less is known about the entanglement
of indistinguishable particles. Recent work revealed that an understanding of
fermionic entanglement can also provide us with information about the entanglement
of distinguishable subsystems\cite{Eckert,VL}. Motivated by this, in this paper we would like
to present a way to construct fermionic entanglement measures for pure states.

Generally, one looks for invariant functions that are polynomial in the
coefficients and their conjugates of the pure state with respect to a fixed
orthonormal basis. More abstractly (and independently of basis choices), one would
like to find subrepresentations of $S(V\oplus V^{*})$ isomorphic to the trivial
one or one dimensional subrepresentations of $S(V)$, where $V$ is a representation
of some group $G$ (either LU or SLOCC) and $S(\cdot)$ denotes the symmetric algebra
on a vector space.

Here, we take a slightly different approach, and use the projections to every
subrepresentation in $S(V)$ and associate invariants to them. Here $V$ is
the state space of a $k$-fermion system, and the group considered is the LU group,
i.e. the unitary group acting on the one-particle Hilbert space. This approach
has the advantage that the resulting formulae are independent of the dimension of
the one-particle state space.

The outline of the paper is as follows. In section~\ref{sec:sym} we recall how each graded
part of the symmetric algebra of a Hilbert space comes equipped with an invariant
inner product. In section~\ref{sec:invs} this inner product is utilized in order to associate
LU-invariants to every isotypic subspace of the space of degree $m$ polynomials in
the coefficients of a multi-fermion state. In section~\ref{sec:high} a special case is
considered, namely, the invariant associated to the subrepresentation containing
the weight space with highest weight. The method to obtain explicit formulae
is also presented here. In section~\ref{sec:ex} some examples are worked out illustrating
various features of our approach. In section~\ref{sec:SLOCC} the relationship between LU-invariants
generated this way and SLOCC-invariants is highlighted. In section~\ref{sec:dist} it is
briefly mentioned, how the fermionic invariants obtained can be used to construct
local unitary invariants for quantum systems with distinguishable constituents.
For the readers' convenience, a summary of some concepts from the representation
theory of the unitary groups can be found in the Appendix.

\section{The symmetric algebra of a Hilbert space}\label{sec:sym}

Throughout this section $\mathcal{H}$ denotes a finite dimensional complex Hilbert
space with inner product $\langle\cdot,\cdot\rangle$. We regard $\mathcal{H}$ as a
representation of $U(\mathcal{H})=\{\varphi:\mathcal{H}\to\mathcal{H}|\forall v\in\mathcal{H}:\|\varphi v\|=\|v\|\}$.

Let $S(\mathcal{H})$ denote the symmetric algebra of $\mathcal{H}$ that is, the
algebra of polynomials in vectors of $\mathcal{H}$. $S(\mathcal{H})$ has the
structure of a graded algebra, its degree $m$ homogenous subspace will be denoted
by $S^{m}(\mathcal{H})$. As $S^{1}(\mathcal{H})=\mathcal{H}$, and this subspace
generates $S(\mathcal{H})$ as a unital commutative algebra, we have that $U(\mathcal{H})$
acts on $S(\mathcal{H})$ with algebra automorphisms.

The inner product on $\mathcal{H}$ induces one on $S^{m}(\mathcal{H})$ by the
following requirement: for $u,v\in\mathcal{H}$ let $\langle u^{m},v^{m}\rangle=\langle u,v\rangle^{m}$.
This turns out to be equivalent to saying that for a unit vector $u\in\mathcal{H}$, $\|u^{m}\|=1$.
Clearly, this inner product will be preserved by the action of $U(\mathcal{H})$
on $S(\mathcal{H})$, restricted to each homogenous subspace. It is known from the
representation theory of the unitary groups that in this way each
$S^{m}(\mathcal{H})$ becomes an irreducible unitary representation of $U(\mathcal{H})$,
and hence the induced inner product is essentially the only one invariant under
this group action.

To be more explicit, if we fix an orthonormal basis $\{e_1,\ldots,e_d\}$ in
$\mathcal{H}$, then $S^{m}(\mathcal{H})$ is the space of degree $m$ homogenous
polynomials in the basis elements, and the degree $m$ monomials with coefficient
$1$ form a basis. These monomials are mutually orthogonal, but they are not
unit vectors. If $v=\sum_{i=1}^{d}\alpha_i e_i$ then
\begin{equation}
\begin{split}
v^{m} & = \sum_{i_1=1}^{d}\cdots\sum_{i_m=1}^{d}\alpha_{i_1}\alpha_{i_2}\cdots\alpha_{i_m}e_{i_1}\cdots e_{i_m}  \\
      & = \sum_{\substack{k_1,\ldots,k_d\ge 0  \\  k_1+\ldots+k_d=m}}\binom{m}{k_1,k_2,\ldots,k_d}\alpha_{1}^{k_1}\alpha_{2}^{k_2}\cdots\alpha_{d}^{k_d}e_{1}^{k_1}e_{2}^{k_2}\cdots e_{d}^{k_d}
\end{split}
\end{equation}
(where $\binom{m}{k_1,k_2,\ldots,k_d}$ is the multinomial coefficient) hence
\begin{equation}
\|v^{m}\|^{2}=\sum_{\substack{k_1,\ldots,k_d\ge 0  \\  k_1+\ldots+k_d=m}}{\binom{m}{k_1,k_2,\ldots,k_d}}^{2}|\alpha_{1}^{k_1}\alpha_{2}^{k_2}\cdots\alpha_{d}^{k_d}|^{2}\|e_{1}^{k_1}e_{2}^{k_2}\cdots e_{d}^{k_d}\|^{2}
\end{equation}
Comparing this with
\begin{equation}
(\|v\|^{2})^{m}=\left(\sum_{i=1}^{d}|\alpha_i|^{2}\right)^{m}=\sum_{\substack{k_1,\ldots,k_d\ge 0  \\  k_1+\ldots+k_d=m}}\binom{m}{k_1,k_2,\ldots,k_d}|\alpha_{1}^{k_1}\alpha_{2}^{k_2}\cdots\alpha_{d}^{k_d}|^{2}
\end{equation}
we conclude that
\begin{equation}\label{eq:norm}
\|e_{1}^{k_1}e_{2}^{k_2}\cdots e_{d}^{k_d}\|={\binom{m}{k_1,k_2,\ldots,k_d}}^{-1/2}
\end{equation}

\section{Invariants for multi-fermion systems}\label{sec:invs}

In this section, $\mathcal{H}$ will be a finite dimensional complex Hilbert
space, playing the role of the single-particle state space of a fermionic
quantum system of $k$ particles. If $n=\dim\mathcal{H}$, then the $k$-particle
Hilbert space is isomorphic to
\begin{equation}
\bigwedge^{k}\mathcal{H}\simeq\bigwedge^{k}\mathbb{C}^{n}
\end{equation}
and hence its dimension is $\binom{n}{k}$. This space also comes equipped with
an inner product induced from that of $\mathcal{H}$, and a unitary action of
$U(\mathcal{H})$ which models local unitary transformations of the $k$-particle states.

Now let us look at the symmetric algebra of the $k$-fermion state space. On its
homogenous subspaces $S^{m}(\bigwedge^{k}\mathcal{H})$ we have an action of
$U(\mathcal{H})$ which factors through $U(\bigwedge^{k}\mathcal{H})$ and an
inner product which is invariant under $U(\bigwedge^{k}\mathcal{H})$
hence also invariant under $U(\mathcal{H})$. This time the representation of
$U(\mathcal{H})$ is not irreducible, and $S^{m}(\bigwedge^{k}\mathcal{H})$
can be split into the orthogonal sum of $U(\mathcal{H})$-invariant subspaces
in a non-trivial way:
\begin{equation}
S^{m}(\bigwedge^{k}\mathcal{H})=\bigoplus_{\lambda} V_{\lambda}
\end{equation}
where $\lambda$ ranges over the partitions of $km$, and $V_{\lambda}$ is the
corresponding isotypic component of the representation. Interestingly, this
decomposition is independent of $n$ (apart from the vanishing of the
subrepresentations associated to partitions involving more than $n$ parts, but
for $n\ge km$ this certainly cannot happen). This is essentially due to the fact
that a degree $j$ symmetric polynomial in $n$ variables can be reconstructed
even if we only now its restriction to a subspace in which only $j$ variables
take nonzero values.

This decomposition allows us to introduce unitary invariants, one for each
isotypic subspace. Let $\psi\in\bigwedge^{k}\mathcal{H}$ be a $k$-fermion state
vector, and $\psi^{m}$ its $m$-th power which is an element of $S^{m}(\bigwedge^{k}\mathcal{H})$.
Let $P_{\lambda}:S^{m}(\bigwedge^{k}\mathcal{H})\to V_{\lambda}$ denote the orthogonal
projection. This commutes with the representation of $U(\mathcal{H})$, therefore
the value of $I_{\lambda}(\psi):=\langle\psi^m,P_{\lambda}\psi^m\rangle=\|P_{\lambda}\psi^m\|^{2}$ is
invariant:
\begin{equation}
\forall g\in U(\mathcal{H}):\langle (g\cdot\psi)^m,P_{\lambda}(g\cdot\psi)^m\rangle=\langle g\cdot(\psi^m),g\cdot(P_{\lambda}\psi^m)\rangle=\langle\psi^m,P_{\lambda}\psi^m\rangle
\end{equation}
Note that the number of linearly independent invariants is one less than the
number of nonvanishing isotypic components, because
\begin{equation}\label{eq:constraint}
\sum_{\lambda}\langle\psi^m,P_{\lambda}\psi^m\rangle=\langle\psi^m,\left(\sum_{\lambda}P_{\lambda}\right)\psi^m\rangle=\langle\psi^m,\psi^m\rangle=1
\end{equation}

Unfortunately, it is in general not an easy task to calculate the projections for
all these invariant subspaces for every value of $k$ and $m$, but some of them
are easy enough to be done by hand.

\section{Invariant subspaces with maximal highest weight}\label{sec:high}

Let us now fix an ordered orthonormal basis $(e_{1},\ldots,e_{n})$ in $\mathcal{H}$.
This also gives the isomorphisms $\mathcal{H}\simeq\mathbb{C}^{n}$, and
$U(\mathcal{H})\simeq U(n,\mathbb{C})$. The maximal torus $T$ which acts diagonally in
this basis is then identified with the subgroup of diagonal unitary matrices.
The set of one dimensional representations $T\to\mathbb{C}^{\times}$ is a commutative group
isomorphic to $\mathbb{Z}^{n}$. We will use the following identification:
\begin{equation}
\begin{split}
(r_1,r_2,\ldots,r_n):T\to\mathbb{C}^{\times}  \\
(r_1,r_2,\ldots,r_n)(diag(\lambda_1,\ldots,\lambda_n))=\prod_{i=1}^{n}\lambda_{i}^{r_i}
\end{split}
\end{equation}
On the set of $n$-tuples of integers we have the usual partial ordering:
$(r_1,r_2,\ldots,r_n)$ is called positive iff $r_1+\ldots+r_n=0$ and $r_1$, $r_1+r_2$,
\ldots, $r_1+r_2+\ldots+r_{n-1}$ are nonnegative, and $\lambda\ge\mu$ iff $\lambda-\mu$
is positive. A finite dimensional representation of $U(\mathcal{H})$, when restricted
to $T$, splits into one dimensional subrepresentations. The representations with
nonzero multiplicity are called weights, and a vector whose orbit under $T$ spans a
one dimensional subspace is called a weight vector. The isomorphism class of an irreducible
representation of $U(\mathcal{H})$ is determined by its highest weight.

For $I=\{i_1,i_2,\ldots,i_k\}$ where $1\le i_1<i_2<\ldots<i_k\le n$ let us introduce the
following notation:
\begin{equation}\label{eq:basis}
e_{I}=e_{i_1}\wedge e_{i_2}\wedge\ldots\wedge e_{i_k}=\frac{1}{\sqrt{k!}}\sum_{\pi\in S_{k}}\sigma(\pi)e_{i_\pi(1)}\otimes e_{i_\pi(2)}\otimes\ldots\otimes e_{i_\pi(k)}
\end{equation}
where $S_{k}$ is the symmetric group on $k$ elements, and $\sigma:s_{k}\to\{1,-1\}$
denotes the alternating representation. The set $\{e_{\{i_1,\ldots,i_k\}}|1\le i_1<i_2<\ldots<i_k\le n\}$
forms an orthonormal basis of $\bigwedge^{k}\mathcal{H}$, and therefore every $k$-fermion
pure state can be expressed uniquely as a linear combination of these vectors:
\begin{equation}
\psi=\sum_{I\in\binom{[n]}{k}}\psi_{I}e_{I}\textrm{ where }\sum_{I\in\binom{[n]}{k}}|\psi_{I}|^{2}=1
\end{equation}
(Here we have used the short notation $[n]=\{1,2,\ldots,n\}$ and $\binom{[n]}{k}$ denotes the set
of $k$-element subsets of $[n]$.) For each $m\in\mathbb{N}$, the $m$th power of $\psi$ is a vector
in $S^{m}(\bigwedge^{k}\mathcal{H})$:
\begin{equation}
\psi^{m}=\sum_{I_1,\ldots,I_m}\psi_{I_1}\psi_{I_2}\ldots\psi_{I_m}e_{I_1}e_{I_2}\ldots e_{I_m}
\end{equation}

We would like to find a vector in $S^{m}(\bigwedge^{k}\mathcal{H})$ which generates
an irreducible $U(\mathcal{H})$-representation. In general we cannot say much about
all the irreducible subrepresentations, but we always have one weight vector,
$e_{1,2,\ldots,k}^m$, corresponding to the highest weight, which is easily calculated
to be $(m,m,\ldots,m,0,\ldots,0)$ with $k$ nonzero entries. We now have that
$\langle U(\mathcal{H})e_{1,2,\ldots,k}^m\rangle:=W$ is irreducible. The next step
will be to find an orthonormal basis for $W$.

Our first goal will be to find a generating set for $W$ as a linear space, then we
can orthogonalize it to yield an orthonormal basis. To this end, we will use the
fact that $W$ is also an irreducible representation of $GL(n,\mathbb{C})$ whose action
on $S^{m}(\bigwedge^{k}\mathcal{H})$ is defined in the same way as that of $U(\mathcal{H})$.

In order to find a generating set which is easy to handle, we will look for one that
is the union of orbits under $S_n\le GL(n,\mathbb{C})$ (possibly up to a nonzero multiple)
which permutes the basis elements of $\mathcal{H}$. It turns out that we can
require also that the generating set consists of weight vectors. We will call sets
with these properties \emph{good}:
\begin{defn}
Let $S\subset S^{m}(\bigwedge^{k}\mathcal{H})$ be a subset, $e_1,\ldots,e_n$ an
orthonormal basis in $\mathcal{H}$ and $S_n\le U(\mathcal{H})$ the subgroup
which permutes these basis elements. The subset $S$ will be called \emph{good}
(with respect to this basis) if it has the following two properties:
\begin{enumerate}[1.]
\item The subset
\begin{equation}
\mathbb{C}S:=\bigcup_{w\in S}\mathbb{C}w\subseteq S^{m}(\bigwedge^{k}\mathcal{H})
\end{equation}
is fixed under the action of $S_n$.
\item If $v$ is an element of $S$ then if we write $v$ as a polynomial in the vectors
$\{e_{I}\}_{I\in\binom{[n]}{k}}$ then every index $i\in[n]$ appears the same number of
times in every term. Or equivalently: $v$ is a weight vector for the maximal torus fixing
the given orthonormal basis.
\end{enumerate}
\end{defn}
We can immediately see that $\{e_{I}^m\}_{I\in\binom{[n]}{k}}$ is the smallest
\emph{good} subset containing $e_{1,2,\ldots,k}^m$.

To reach every element in $W$, we will use the fact that $GL(n,\mathbb{C})$
is generated by matrices of the form $u_{ij}(s)=id+sE_{ij}$ where $E_{ij}$ is a matrix
with a $1$ at the intersection of the $i$th row and the $j$th column, and zeros everywhere
else. We need to know how these matrices act on the basis elements of $\bigwedge^{k}(\mathcal{H})$.
One can calculate using equation~(\ref{eq:basis}) that
\begin{equation}
u_{ij}(s)\cdot e_{I}=\left\{\begin{array}{ll}
e_{I} & j\notin I  \\
e_{I}+(-1)^{|I\cap(i,j)|}s e_{I\cup\{i\}\setminus\{j\}} & j\in I,i\notin I  \\
e_{I} & i,j\in I  \\
\end{array}\right.
\end{equation}
The first and last cases are not interesting, but the second one allows us to
build our generating set step by step starting from the above mentioned elements.
Keeping track of the appearing sign could cause some difficulty, but we can
overcome this by letting $e_{abc\ldots}=-e_{bac\ldots}$ etc. and simply
substituting $j$ with $i$ without reordering the indices.

Observe, that when $u_{ij}(s)$ acts on a degree $m$ polynomial in the $e_{I}$-s,
then we get a polynomial in $s$ with coefficients in $S^{m}(\bigwedge^{k}\mathcal{H})$.
Since $W$ contains this polynomial for any $s\in\mathbb{C}$, and it is a linear
subspace, $W$ must also contain the coefficient of $s^{l}$ for each $0\le l\le m$
(because the non-vanishing of a Vandermonde determinant). Using this method,
one can calculate in a few steps a generating set for the isotypic (in fact,
irreducible) subspace corresponding to the highest weight. The following lemma
shows which terms should one concentrate on:
\begin{lem}
Let $W\le S^{m}(\bigwedge^{k}\mathcal{H})$ be an invariant subspace and
$S\subseteq W$ a \emph{good} subset

Suppose that $w\in S$ and $i\neq j$ are indices such that $i$ does not appear in
$w$ when written in the monomial basis as above. Then
\begin{enumerate}[a)]
\item The coefficients of every power of $s$ in $u_{ij}(s)\cdot w$ as a polynomial
in $s$ are weight vectors.
\item If the degree of this polynomial is $d$ then the one dimensional subspaces
spanned by the coefficients of $s^{r}$ and $s^{d-r}$ are in the same $S_{n}$-orbit.
\item The coefficient of the constant and the leading terms is contained in $\mathbb{C}S$.
\item If $\mathbb{C}w=\mathbb{C}\pi\cdot w'$ for some $\pi\in S_{n}$, then
the minimal \emph{good} subsets containing $S$ and each coefficient in
the polynomial $u_{ij}(s)\cdot w$ or $u_{\pi^{-1}(i)\pi^{-1}(j)}(s)\cdot w'$ generate the
same subspace.
\end{enumerate}
\end{lem}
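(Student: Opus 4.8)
The plan is to treat the four items essentially independently, as all of them follow by unwinding the definition of \emph{good} and the explicit formula for $u_{ij}(s)\cdot e_I$. First I would set up notation: write $w$ as a linear combination of monomials $e_{I_1}e_{I_2}\cdots e_{I_m}$ in which the index $i$ never occurs among the $I_t$. Applying $u_{ij}(s)$ factor-by-factor (it is an algebra automorphism) turns each $e_{I_t}$ into $e_{I_t}$ when $j\notin I_t$, and into $e_{I_t}+s\,e_{I_t\cup\{i\}\setminus\{j\}}$ (with the sign absorbed into the antisymmetric convention $e_{abc\ldots}=-e_{bac\ldots}$) when $j\in I_t$. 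So $u_{ij}(s)\cdot w$ is a polynomial in $s$ whose coefficient of $s^r$ is the sum, over all ways of choosing $r$ of the slots (among those monomials and factors where $j$ occurs) to "promote" $j$ to $i$, of the resulting monomial. For part a), the key observation is that each such promotion replaces one occurrence of $j$ by one occurrence of $i$ and leaves every other index untouched; hence if $w$ is a weight vector of weight $\mu$, every monomial appearing in the coefficient of $s^r$ has weight $\mu - r(\epsilon_j-\epsilon_i)$, the same for all of them, so that coefficient is again a weight vector. (One must check the coefficient is nonzero or else the claim is vacuous, which is fine.)

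For part b), I would exhibit the explicit $S_n$-element realizing the symmetry: the transposition $\tau=(i\,j)$. Since $i$ does not occur in $w$, $\tau$ acts on $w$ by globally swapping the names $i$ and $j$; because $j$ occurs in $w$ in, say, $N$ factor-slots total, one computes $u_{ij}(s)\cdot w$ versus $\tau\cdot\bigl(u_{ij}(s)\cdot w\bigr)$ and finds that conjugating $u_{ij}(s)$ by $\tau$ gives $u_{ji}(s)$ acting on $\tau\cdot w$, which up to relabelling is the "reverse" promotion; matching degrees, the coefficient of $s^r$ in one equals (up to the action of $\tau$ and an overall sign, absorbed by the convention) the coefficient of $s^{d-r}$ in the other. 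Thus $\mathbb{C}(\text{coeff of }s^r)$ and $\mathbb{C}(\text{coeff of }s^{d-r})$ lie in one $S_n$-orbit. For part c), the constant term of $u_{ij}(s)\cdot w$ is simply $w\in S\subseteq\mathbb{C}S$, and the leading ($s^d$) term is the monomial in which \emph{every} available $j$ has been promoted to $i$; since $i$ did not occur in $w$, the leading term is literally $\tau\cdot w$ up to sign, hence in $\mathbb{C}S$ because $\mathbb{C}S$ is $S_n$-stable.

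Part d) is the one I expect to be the main obstacle, since it is a statement about the \emph{subspaces generated} rather than about individual vectors, and it is where the phrase "minimal good subset containing $S$ and each coefficient" has to be handled carefully. The plan is: given $\mathbb{C}w=\mathbb{C}\,\pi\cdot w'$ with $\pi\in S_n$, note that $\pi$ carries the hypothesis on $w$ (that $i$ is absent) to the statement that $\pi^{-1}(i)$ is absent from $w'$, so $u_{\pi^{-1}(i)\,\pi^{-1}(j)}(s)\cdot w'$ is a legitimate instance of the construction. Then observe that $\pi\,u_{\pi^{-1}(i)\,\pi^{-1}(j)}(s)\,\pi^{-1}=u_{ij}(s)$ as elements of $GL(n,\mathbb{C})$, so $\pi\cdot\bigl(u_{\pi^{-1}(i)\,\pi^{-1}(j)}(s)\cdot w'\bigr)=u_{ij}(s)\cdot(\pi\cdot w')=u_{ij}(s)\cdot(cw)$ for the scalar $c$ with $\pi\cdot w'=cw$; hence the two $s$-polynomials are related by the single element $\pi\in S_n$ and an overall scalar, coefficient by coefficient. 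Since a minimal good subset is, by the definition of good, closed under $S_n$ up to scalars and built from weight vectors (which parts a)–c) guarantee the coefficients are), the good subset generated by $S$ together with the coefficients from $u_{ij}(s)\cdot w$ contains, via $\pi$, all the coefficients from $u_{\pi^{-1}(i)\,\pi^{-1}(j)}(s)\cdot w'$, and vice versa; therefore the two minimal good subsets coincide, and in particular span the same subspace. The only real care needed is to confirm that "minimal good subset containing $X$" is well defined — i.e. that intersections of good subsets are good, or equivalently that one may describe it constructively as the $S_n$-saturation (up to scalars) of $X$ — which is immediate from property 1 of the definition and the fact that $S_n$ preserves weights, hence property 2 as well.
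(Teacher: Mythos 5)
Your proposal is correct and follows essentially the same route as the paper: the factorization $u_{ij}(s)\cdot e_{I_1}\cdots e_{I_m}=\prod_l(e_{I_l}+s\,e_{I_l'})$ for a), the transposition $(i\,j)$ relating the coefficients of $s^r$ and $s^{d-r}$ for b) and c), and the conjugation identity $\pi\,u_{\pi^{-1}(i)\pi^{-1}(j)}(s)\,\pi^{-1}=u_{ij}(s)$ for d). Your extra care about signs and about the well-definedness of the minimal good subset is sound but does not change the argument.
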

\begin{proof}
\begin{enumerate}[a)]
\item $u_{ij}(s)\cdot e_{I_1}e_{I_2}\ldots e_{I_m}=(e_{I_1}+s e_{I_1'})(e_{I_2}+s e_{I_2'})\ldots(e_{I_m}+s e_{I_m'})$
where $I_l'$ is obtained from $I_l$ by replacing $j$ with $i$ if $I_l$ contains
$j$ and $e_{I_l'}=0$ else. The coefficient of $s^l$ contains exactly those terms
in the expansion in which the number of replaced $j$-indices is $l$.
\item $d$ is the (common) number of occurrencies of the index $j$ in each term
of $w$. The coefficient of $s^{d-r}$ term is therefore proportional to the image
of the coefficient of $s^{r}$ under the transposition swapping $e_i$ and $e_j$.
\item The constant term is $w$. 
\item Let $\pi\in S_{n}\le U(\mathcal{H})$ be an element such that $\mathbb{C}w=\mathbb{C}\pi\cdot w'$.
Then
\begin{equation}
\begin{split}
\mathbb{C}u_{ij}(s)\cdot w
  & = u_{ij}(s)\mathbb{C}w  \\
  & = u_{ij}(s)\mathbb{C}\pi\cdot w'  \\
  & = \mathbb{C}u_{ij}(s)\pi\cdot w'  \\
  & = \mathbb{C}\pi u_{\pi^{-1}(i)\pi^{-1}(j)}(s)\cdot w'
\end{split}
\end{equation}
\end{enumerate}
\end{proof}

\begin{cor}
If $S\subseteq S^{m}(\bigwedge^{k}\mathcal{H})$ is a \emph{good} subset and $w\in S$ such that 
in each term of $w$ the index $j$ appears exactly once and $w$ does not contain the index $i$,
then $u_{ij}(s)\cdot w\in\langle S\rangle$ for all $s\in\mathbb{C}$.
\end{cor}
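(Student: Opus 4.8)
The plan is to obtain the Corollary as the degree-one instance of the Lemma. First I would check that the hypotheses of the Lemma apply: $w$ lies in the \emph{good} subset $S$, and $i\neq j$ with the index $i$ absent from $w$. Hence $u_{ij}(s)\cdot w$ may be regarded as a polynomial in $s$ with coefficients in $S^{m}(\bigwedge^{k}\mathcal{H})$, and parts a)--c) of the Lemma are available.

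The key observation is that the extra hypothesis pins down the degree of this polynomial. By the computation in part a) of the Lemma's proof, the coefficient of $s^{l}$ in $u_{ij}(s)\cdot w$ gathers precisely those monomials obtained by replacing the index $j$ with $i$ in exactly $l$ of the factors $e_{I}$. Since by assumption $j$ occurs exactly once in every term of $w$, at most one such replacement is possible in each term, so the polynomial has degree $d=1$; write $u_{ij}(s)\cdot w = w + s\,c$ with $c\in S^{m}(\bigwedge^{k}\mathcal{H})$.

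Now the constant term is $w\in S$ and, since $d=1$, the leading term is exactly $c$; by part c) of the Lemma both $w$ and $c$ lie in $\mathbb{C}S\subseteq\langle S\rangle$. As $\langle S\rangle$ is a linear subspace it therefore contains $w+s\,c = u_{ij}(s)\cdot w$ for every $s\in\mathbb{C}$, which is the claim.

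I do not anticipate a genuine obstacle: the whole content is the remark that ``exactly once'' is precisely the condition forcing the $s$-polynomial to be affine-linear, so that the notions ``leading coefficient'' and ``coefficient of $s$'' coincide and part c) closes the argument. Writing this out in full amounts to little more than the three paragraphs above.
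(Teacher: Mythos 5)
Your argument is correct and follows the same route as the paper: the hypothesis that $j$ occurs exactly once in each term forces $u_{ij}(s)\cdot w$ to be a degree-one polynomial in $s$, part c) of the Lemma places both the constant and leading coefficients in $\mathbb{C}S$, and linearity of $\langle S\rangle$ finishes the proof. The paper's own proof is just a one-sentence compression of exactly this reasoning.
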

\begin{proof}
In this case, $u_{ij}(s)\cdot w$ is a degree $1$ polynomial in $s$, therefore, by the lemma above,
both of its terms are in $\mathbb{C}S$, hence their sum is in $\langle S\rangle$.
\end{proof}

To sum up, we begin with the vector $e_{12\ldots k}^m$, then act on it and the
distinct types of obtained coefficients of $s$ successively with the matrices
$u_{ij}(s)$, as long as we get new types of vectors. Finally, we take union of
the $S_n$-orbits of the vectors we have met. This will result in a generating set of $W$.

Once we have a generating set, we orthogonalize it, and for each vector $w$ in
the orthogonal set we calculate the value of $|\langle w,\psi^m\rangle|^2\|w\|^{-2}$.
Finally, the sum of these numbers is the value of our invariant evaluated on
the state $\psi$. Explicitely, suppose, that $\psi=\sum_{I}\psi_{I}e_{I}$, and
$w=\sum_{k_1,k_2,\ldots,k_d}\beta_{k_1,k_2,\ldots,k_d}e_{I_1}^{k_1}\ldots e_{I_d}^{k_d}$,
where $d=\binom{n}{k}$ and $I_1,\ldots,I_d$ are the possible $k$-element subsets
of $[n]$, and $k_1,\ldots,k_d$ run over nonnegative integers such that their sum
equals $m$. Then by equation~(\ref{eq:norm})
\begin{equation}\label{eq:innerp}
\begin{split}
\langle w,\psi^m\rangle
  & = \left\langle \sum_{k_1,\ldots,k_d}\beta_{k_1,\ldots,k_d}e_{I_1}^{k_1}\ldots e_{I_d}^{k_d},\sum_{k_1',\ldots,k_d'}\binom{m}{k_1',\ldots,k_d'}\psi_{I_1}^{k_1'}\ldots\psi_{I_d}^{k_d'}e_{I_1}^{k_1'}\ldots e_{I_d}^{k_d'}\right\rangle  \\
  & = \sum_{k_1,\ldots,k_d}\overline{\beta_{k_1,\ldots,k_d}}\psi_{I_1}^{k_1'}\ldots\psi_{I_d}^{k_d'}\underbrace{\binom{m}{k_1,\ldots,k_d}\|e_{I_1}^{k_1}\ldots e_{I_d}^{k_d}\|^2}_{1}
\end{split}
\end{equation}

We would like to remark that if we are to use these invariants as measures of
entanglement, then, taking into account the constraint (\ref{eq:constraint}) and the
fact that the $m$th power of a decomposable state is always in the irreducible
subspace generated by $e_{12\ldots k}^m$, we should use $1-\langle\psi^m,P_{W}\psi^m\rangle=\langle\psi^m,P_{W^{\perp}}\psi\rangle$,
or the invariants associated to the subspaces other than $W$.

If we wanted to calculate the projectors of the other isotypic subspaces, then
we simply needed to take the orthogonal complement of $W$, and find the weight
vectors corresponding to the highest weight, and proceed with it the same way
as we did with $e_{12\ldots k}^m$.

\section{Examples}\label{sec:ex}

\subsection{$k=m=2$ case}

The first nontrivial case is the space of quadratic polynomials in vectors of
the space of two fermions. As we have seen, a weight vector with maximal weight
is $e_{12}^2$, therefore $W:=\langle GL(n,\mathbb{C})e_{12}^2\rangle$ contains $e_{ij}^2$
for $1\le i<j\le n$. In the next step we let $u_{kj}(s)$ act on an element:
\begin{equation}
u_{kj}(s)(e_{ij}^2)=(e_{ij}+s e_{ik})^{2}=e_{ij}^2+2s e_{ij}e_{ik}+s^{2}e_{ik}^2
\end{equation}
This shows that we must add $e_{ij}e_{ik}$ for each triple $i,j,k$, where the
appearing indices are distinct. Now as
\begin{equation}
u_{li}(s)(e_{ij}e_{ik})=(e_{ij}+s e_{lj})(e_{ik}+s e_{lk})=e_{ij}e_{ik}+s(e_{lj}e_{ik}+e_{ij}e_{lk})+e_{lj}e_{lk}
\end{equation}
we also have to add $e_{ij}e_{lk}+e_{ik}e_{lj}$ for each combination of indices.

By the corollary after the lemma, we are ready, but it is instructive to verify
the dimension of the generated subspace. Clearly, $\{e_{ij}^{2}\}_{1\le i<j\le n}\cup\{e_{ij}e_{ik}\}_{1\le i\le n,i\neq j<k\neq i}$ consists of pairwise orthogonal elements. The third type in the
generating set is $\{e_{ij}e_{lk}+e_{ik}e_{lj}\}$ which is seen to generate a
two dimensional space for each set of four indices, and these subspaces are
pairwise orthogonal and also orthogonal to the other elements. Therefore,
\begin{equation}
\dim W=\binom{n}{2}+n\binom{n-1}{2}+2\binom{n}{4}=\frac{n^2(n^2-1)}{12}
\end{equation}
which is exactly the dimension of the irreducible representation of $GL(n,\mathbb{C})$
corresponding to the partition $(2,2)$.

Orthogonalization needs to be performed only within the two dimensional subspaces,
and this leads to the vectors $e_{ij}e_{kl}+e_{ik}e_{jl}$ and $e_{ij}e_{lk}+2e_{il}e_{jk}-e_{ik}e_{jl}$
for $1\le i<j<k<l\le n$. The expression for the invariant corresponding to $W$ is
therefore (using equation~(\ref{eq:innerp}))
\begin{equation}
\begin{split}
I_{(2,2)}(\psi)  & = \langle \psi,P_{W}\psi\rangle \\
  & = \sum_{1\le i<j\le n}|\psi_{ij}^2|^2 + \sum_{i=1}^{n}\sum_{\substack{1\le j<k\le n \\ j\neq i\neq k}}2|\psi_{ij}\psi_{ik}|^2  \\
  & + \sum_{1\le i<j<k<l\le n}\left(|\psi_{ij}\psi_{kl}+\psi_{ik}\psi_{jl}|^2+\frac{1}{3}|\psi_{ij}\psi_{lk}+2\psi_{il}\psi_{jk}-\psi_{ik}\psi_{jl}|^2\right)
\end{split}
\end{equation}

In this case, we can also show that $W^{\perp}$ is irreducible. To this end,
let us recall that for $n=4$ there exists a degree two $SL(4,\mathbb{C})$-invariant
over $\bigwedge^{2}\mathbb{C}^{4}$, namely, the polynomial in the Pl\"ucker relation
which is known to be a sufficient and necessary condition of separability. The
subrepresentation generated by this polynomial is the representation indexed
by the partition $(1,1,1,1)$, therefore this one must appear also in the $n\neq 4$
case. As the dimension of this is $\binom{n}{4}$, and
\begin{equation}
\dim W+\binom{n}{4}=\frac{n(n-1)(n^2-n+2)}{8}=\dim S^{2}(\bigwedge^{2}\mathbb{C}^{n})
\end{equation}
therefore $W^{\perp}$ is irreducible, and the unitary invariant associated to it
gives a generalization of the Pl\"ucker relation. The explicit formula turns out
to be simpler than the previous one:
\begin{equation}
I_{(1,1,1,1)}(\psi) = \langle \psi,P_{W^{\perp}}\psi\rangle = \sum_{1\le i<j<k<l\le n}\frac{2}{3}|\psi_{ij}\psi_{kl}+\psi_{ik}\psi_{lj}+\psi_{il}\psi_{jk}|^2
\end{equation}

\subsection{$k=2$, $m=3$ case}

In this case, a weight vector for the highest weight is $e_{12}^{3}$. Again,
$W:=\langle GL(n,\mathbb{C})e_{12}^{3}\rangle$. We are looking for a generating
set of $W$. We extend $e_{12}^3$ into a \emph{good} set $\{e_{ij}\}_{1\le i<j\le n}$.
Now we need to add the coefficient of $s$ in
\begin{equation}
u_{32}(s)(e_{12}^3)=(e_{12}+s e_{13})^3=e_{12}+3s e_{12}^2e_{13}+3s^2 e_{12}e_{13}^2+s^3e_{13}^3
\end{equation}
and one vector from each element of the orbit of the subspace generated by it:
$\{e_{ij}^2e_{ik}\}_{i,j,k\in[n]}$. The next steps are:
\begin{equation}
u_{43}(s)(e_{12}e_{13}^2)=e_{12}(e_{13}+se_{14})^2=\ldots+2se_{12}e_{13}e_{14}+s^2(\ldots)
\end{equation}
\begin{equation}
\begin{split}
u_{m1}(s)(e_{12}e_{13}^2) & =(e_{12}+se_{m2})(e_{13}+se_{m3})^2  \\
  &  =\ldots+s(2e_{12}e_{13}e_{m3}+e_{m2}e_{13}^{2})+s^2(\ldots)+s^3(\ldots)
\end{split}
\end{equation}
Here $m=2$ is special, in this case the second term in the coefficient of $s$
vanishes, hence we have to add $\{2e_{ij}e_{ik}e_{mk}+e_{mj}e_{ik}^{2}\}$ for any
ordered pair of disjoint pairs $({i,j},{k,m})$, and also $e_{ij}e_{jk}e_{kj}$ for
$\{i,j,k\}\in\binom{[n]}{3}$. The remaining steps are
\begin{equation}
\begin{split}
u_{m1}(s)e_{12}e_{13}e_{14} & =(e_{12}+se_{m2})(e_{13}+se_{m3})(e_{14}+se_{m4})  \\
  & =\ldots+s(e_{m2}e_{13}e_{14}+e_{12}e_{m3}e_{14}+e_{12}e_{13}e_{m4})+s^2(\ldots)+s^3(\ldots)
\end{split}
\end{equation}
\begin{equation}\label{eq:sixterm}
\begin{split}
u_{m1}(s)(e_{52}e_{13}e_{14}+e_{12}e_{53}e_{14}+e_{12}e_{13}e_{54}) & =\ldots+s(e_{52}e_{m3}e_{14}+e_{52}e_{13}e_{m4}  \\
  &  +  e_{m2}e_{53}e_{14}+e_{12}e_{53}e_{m4}  \\
  &  +  e_{m2}e_{13}e_{54}+e_{12}e_{m3}e_{54})+s^2(\ldots)
\end{split}
\end{equation}
Here $m\le 5$ does not lead to a new subspace.

It turns out that the vectors obtained so far are enough to generate $W$. In this case,
orthogonalization turns out to be a bit lengthy, especially in the case of the six-term
vectors like in equation (\ref{eq:sixterm}). These span a five dimensional subspace for
each six-element set of indices $i_1,\ldots,i_6$. For these the coefficients of the
monomials are given as a matrix:
\begin{equation}
\left(\begin{array}{ccccccccccccccc}
 0 &  0 &  0 &  0 &  0 &  0 &  0 &  1 &  1 &  0 &  1 &  1 &  0 &  1 &  1 \\
 0 &  0 &  0 &  0 &  3 &  3 &  0 &  1 &  1 &  3 &  1 & -2 &  3 &  1 & -2 \\
 0 &  0 &  0 &  2 &  1 & -1 &  2 &  1 & -1 &  1 &  1 &  0 & -1 & -1 &  0 \\
 0 &  2 &  2 &  0 &  1 &  1 &  0 &  1 &  1 & -1 & -1 &  0 & -1 & -1 &  0 \\
 4 &  2 & -2 &  2 &  1 & -1 & -2 & -1 &  1 & -1 &  1 &  2 &  1 & -1 & -2 \\
\end{array}\right)
\end{equation}
The order of the monomials is
$(12)(34)(56)$, $(12)(35)(46)$, $(12)(36)(45)$, $(13)(24)(56)$, $(13)(25)(46)$, $(13)(26)(45)$, $(14)(23)(56)$, $(14)(25)(36)$, $(14)(26)(35)$, $(15)(23)(46)$, $(15)(24)(36)$, $(15)(26)(34)$, \\ $(16)(23)(45)$, $(16)(24)(35)$, $(16)(25)(34)$, where $(ab)(cd)(ef)$ is a short notation for $e_{i_a,i_b}e_{i_c,i_d}e_{i_e,i_f}$.
The norms inverse squared of these vectors are
\begin{equation}
1, \frac{1}{8}, \frac{3}{8}, \frac{3}{8}, \frac{1}{8}
\end{equation}
respectively. The orthogonal generators coming from the remaining vectors are
given in table (\ref{tab:ortho})

\begin{table}[htb]
\centering
\begin{tabular}{p{5.5cm}|c|c|c}
form  &  indices  &  dimension  &  $\|\cdot\|^{-2}$  \\
\hline
$e_{ij}^3$  &  $\{i,j\}$  &  $\binom{n}{2}$  &  $1$  \\
\hline
$e_{ij}^2e_{ik}$  &  $\{i\},\{j\},\{k\}$  &  $n(n-1)(n-2)=6\binom{n}{3}$  &  $3$  \\
\hline
$e_{ij}e_{ik}e_{il}$  &  $\{i\},\{j,k,l\}$  &  $n\binom{n-1}{3}=4\binom{n}{4}$  &  $6$  \\
\hline
$e_{ij}e_{ik}e_{kj}$  &  $\{i,j,k\}$  &  $\binom{n}{3}$  &  $6$  \\
\hline
$e_{ij}^2e_{kl}+2e_{ij}e_{il}e_{kj}$  &  $\{i,j\},\{k,l\}$  &  $2\binom{n}{2}\binom{n-2}{2}$  &  $1$  \\
\cline{1-1} \cline{4-4}
$-2e_{ij}^2e_{kl}+6e_{ij}e_{ik}e_{lj}+2e_{ij}e_{il}e_{kj}$  &    &    &  $\frac{1}{8}$  \\
\hline
$e_{ik}e_{il}e_{jm}+e_{ik}e_{jl}e_{im}+e_{jk}e_{il}e_{im}$  &  $\{i\},\{j,k,l,m\}$  &  $3n\binom{n-1}{4}=15\binom{n}{5}$  &  $2$  \\
\cline{1-1} \cline{4-4}
$e_{ik}e_{il}e_{jm}+3e_{ik}e_{ij}e_{ml}+e_{im}e_{il}e_{kj}+3e_{im}e_{ij}e_{kl}+2e_{im}e_{ik}e_{jl}$  &    &    &  $\frac{1}{4}$  \\
\cline{1-1} \cline{4-4}
$2e_{ij}e_{il}e_{mk}+e_{ik}e_{il}e_{mj}+e_{ik}e_{ij}e_{ml}+e_{im}e_{il}e_{jk}+e_{im}e_{ij}e_{kl}$  &    &    &  $\frac{3}{4}$  \\
\end{tabular}
\caption{Orthogonalized generators for $W$. Indices shown in one set are indistinguishable for counting purposes.\label{tab:ortho}}
\end{table}

Using these data, the value of the invariant $I_{(3,3)}$ can be calculated in a straightforward
way, but the full formula is too long to be presented explicitely.

The orthogonal complement of $W$ clearly has a highest weight of $(2,2,1,1)$,
and we could find a generator of the unique one dimensional weight space corresponding
to it, and calculate the projector of its invariant subspace. Instead of this,
we follow another approach. According to the plethysm
\begin{equation}
s_{(3)}[s_{(1,1)}]=s_{(3,3)}+s_{(2,2,1,1)}+s_{(1,1,1,1,1,1)}
\end{equation}
for $n=6$, an $SL(6,\mathbb{C})$-invariant polynomial appears. It is easy to
guess how this should look like: for a state $\psi\in\bigwedge^{2}\mathbb{C}^6$,
we can construct $\psi\wedge\psi\wedge\psi$ which is an element of $\bigwedge^{6}\mathbb{C}^6$,
a one dimensional vector space on which $GL(6,\mathbb{C})$ acts by multiplication
with the determinant. Therefore this element remains unchanged under $SL(6,\mathbb{C})$,
and its norm squared is an $U(6,\mathbb{C})$-invariant polynomial in the coefficients
of $\psi$ and their conjugates. Our invariant corresponding to the subrepresentation
indexed by the partition $(1,1,1,1,1,1)$ must be proportional to it. Explicitely,
it equals to
\begin{equation}
\frac{1}{11520}\left|\sum_{\pi\in S_6}\sigma(\pi)\psi_{\pi(1),\pi(2)}\psi_{\pi(3),\pi(4)}\psi_{\pi(5),\pi(6)}\right|^{2}
\end{equation}
Here, the sum is over all the permutations, but actually there are $15$ different
terms, each counted $48=3!\cdot 2^3$ times. Alternatively, we could sum over
the partitions of $[n]$ into three two-element sets.

The $n\ge 6$ case can be obtained similarly to the previous section. Taking all
the six-element subsets of $[n]$ polynomials like this span an $\binom{n}{6}$
dimensional subspace which is also the dimension of the invariant subspace we
are looking for. Therefore in the general case the invariant is
\begin{equation}
I_{(1,1,1,1,1,1)}(\psi)=\frac{1}{11520}\sum_{I\in\binom{[n]}{6}}\left|\sum_{\pi\in S_6}\sigma(\pi)\psi_{i_\pi(1),i_\pi(2)}\psi_{i_\pi(3),i_\pi(4)}\psi_{i_\pi(5),i_\pi(6)}\right|^{2}
\end{equation}
where $I=\{i_1,\ldots,i_6\}$.

These two invariants are linearly independent, and they sum to $1$ with the one
associated to the third irreducible subspace.

\subsection{$k=3$, $m=2$ case}

Now we turn to the first case with more than two particles. In
$S^{2}(\bigwedge^{3}\mathcal{H})$ the vector with highest weight is $e_{123}^2$.
We proceed in a similar way as before:
\begin{equation}
u_{n3}(s)(e_{123}^2)=e_{123}^2+2se_{123}e_{12n}+s^{2}e_{12n}^2
\end{equation}
\begin{equation}
u_{n2}(s)(e_{123}e_{124})=\ldots+s(e_{123}e_{1n4}+e_{1n3}e_{124})+s^2(\ldots)
\end{equation}
\begin{equation}\label{eq:fourterm}
u_{n1}(s)(e_{123}e_{154}+e_{153}e_{124})=\ldots+s(e_{123}e_{n54}+e_{n23}e_{154}+e_{153}e_{n24}+e_{n53}e_{124})+s^2(\ldots)
\end{equation}
These vectors already form a generating set, we only need to orthogonalize this
set. For a fixed subset of six indices, the vectors of the form like in (\ref{eq:fourterm})
span a five dimensional subspace. Orthogonal generators for this are again given with
the coefficients of the monomials as a matrix:
\begin{equation}\label{eq:coeff2}
\left(\begin{array}{cccccccccc}
 1 &  0 &  0 &  1 &  0 &  1 &  0 &  0 &  0 &  1  \\
-1 &  0 &  2 &  1 &  0 &  1 &  2 &  0 &  0 & -1  \\
-1 &  0 &  0 & -1 &  2 &  1 &  0 &  0 &  2 &  1  \\
 1 &  4 &  2 & -1 &  2 &  1 & -2 &  0 & -2 & -1  \\
 1 &  1 & -1 & -1 & -1 &  1 &  1 &  3 &  1 & -1  \\
\end{array}\right)
\end{equation}
The order of monomials is
$(123)(456)$, $(124)(356)$, $(125)(346)$, $(126)(345)$, $(134)(256)$, $(135)(246)$, $(136)(245)$, $(145)(236)$, $(146)(235)$, $(156)(234)$, where $(abc)(def)$ is a shorthand notation for the vector $e_{i_a,i_b,i_c}e_{i_d,i_e,i_f}$.
The inverse squared norms of these vectors are
\begin{equation}
\frac{1}{2}, \frac{1}{6}, \frac{1}{6}, \frac{1}{18}, \frac{1}{9}
\end{equation}
respectively. The orthogonal generators coming from the remaining vectors are
given in table~(\ref{tab:ortho2})

\begin{table}[htb]
\centering
\begin{tabular}{p{5.5cm}|c|c|c}
form  &  indices  &  dimension  &  $\|\cdot\|^{-2}$  \\
\hline
$e_{ijk}^2$  &  $\{i,j,k\}$  &  $\binom{n}{3}$  &  $1$  \\
\hline
$e_{ijk}e_{ijl}$  &  $\{i,j\},\{k,l\}$  &  $\binom{n}{2}\binom{n-2}{2}$  &  $2$  \\
\hline
$e_{ijm}e_{ikl}+e_{ijk}e_{iml}$  &  $\{i\},\{j,k,l,m\}$  &  $2n\binom{n-1}{4}$  &  $1$  \\
\cline{1-1} \cline{4-4}
$e_{ijm}e_{ikl}+2e_{ijl}e_{imk}-e_{ijk}e_{iml}$  &    &    &  $\frac{1}{3}$  \\
\end{tabular}
\caption{Orthogonalized generators for the subspace generated by the highest weight vector. Indices shown in one set are indistinguishable for counting purposes.\label{tab:ortho2}}
\end{table}

The value of $I_{(2,2,2)}$ can now be calculated. This time the orthocomplement
is also irreducible, so we get one independent invariant in this case.

\section{SLOCC-invariants and local unitary invariants}\label{sec:SLOCC}

In the examples we have seen local unitary invariants with a special property: for
a particular value of $n$, the corresponding irreducible subspace becomes one
dimensional, and the subspace is pointwise fixed under the action of $SL(n,\mathbb{C})$,
that is, the local unitary invariant turns out to be a SLOCC-invariant. Let us examine
this case in more detail.

The irreducible polynomial representation of $SL(n,\mathbb{C})$ indexed by the
partition $\lambda$ is one dimensional precisely when $\lambda$ consists of equal
parts. In this case, $\lambda=(r,r,\ldots,r)$ is a partition of $nr$, hence a
neccesary condition for it to occur as a subrepresentation of $S^{m}(\bigwedge^{k}\mathbb{C}^{n})$
is that $mk=nr$, and in this case $GL(n,\mathbb{C})$ acts on it by multiplication
with the $r$th power of the determinant. The norm squared is therefore invariant
under $U(n,\mathbb{C})$.

In our notations, this subspace is spanned by a polynomial $w$ in the basis
vectors $e_{1},\ldots,e_{n}$. $w$ is a weight vector with weight $(r,r,\ldots,r)$,
and it generates a one dimensional $U(n,\mathbb{C})$-invariant subspace. The
crucial thing is that when we increase the dimension $n$ of the single particle
state space to $n'$, $w$ remains a weight vector that generates an irreducible
$U(n',\mathbb{C})$-invariant subspace, but it is no longer one dimensional.
Therefore, the invariant corresponding to this subspace will be a generalization
of the SLOCC-invarant we have begun with, but is now only a unitary invariant.

The explicit form of the resulting invariant can be obtained in general using
the method outlined above: we must act on it with $u_{ij}(s)$-s and elements
of $S_{n'}$. A particularly simple special case is when $r=1$. In this case
the dimension of the representation corresponding to $\lambda$ is $\binom{n'}{n}$,
and an orthonormal basis can be obtained by acting on $w$ by elements of $S_{n'}$.
Therefore the invariant can be obtained by calculating the value of the
SLOCC-invariant with the initial index set $[n]$ replaced by every element
of $\binom{[n']}{n}$, and summing their absolute values squared.

\section{Distinguishable particles}\label{sec:dist}

We can also obtain many (but not all) local unitary invariants for distinguishable
subsystems from our fermionic ones\cite{Gittings,Eckert,VL}. Suppose that we would like to find local
unitary invariants for a quantum system containing $k$ subsystems with Hilbert
space dimensions $n_1,\ldots,n_k$, and let $n=n_1+\ldots+n_k$. Then using the
branching rule
\begin{equation}
\begin{split}
U(n,\mathbb{C}) & \ge U(n_1,\mathbb{C})\times\ldots\times U(n_k,\mathbb{C})  \\
\bigwedge^{k}(\mathbf{n}) & \to \bigwedge^{k}((\mathbf{n_1})\oplus\ldots\oplus(\mathbf{n_k}))\simeq\ldots\oplus(\mathbf{n_1},\mathbf{n_2},\ldots,\mathbf{n_k})\oplus\ldots
\end{split}
\end{equation}
we can identify the full state space with a subspace of a fermionic Hilbert space
with $k$ particles and $n$ dimensional single particle state space. Then we can
pull back the fermionic invariants constructed by the method above.
For example, for $k=2$, the restriction of $I_{(1,1,1,1)}$ is proportional
to the square of the norm of the concurrence vector\cite{Horodecki}.

Observe, that these special LU-invariants have a larger symmetry group, in
particular, if $n_1=\ldots=n_{k}$, then they are also permutation invariant.

As an other example, let us consider the quantum system of three qubits. Using
the scheme outlined above, its Hilbert space can be thought of as a subspace
of the Hilbert space of a three-fermion system with six single particle states.
In the previous section (see equation~(\ref{eq:coeff2}) and table~\ref{tab:ortho2})
we have derived a formula for a local unitary invariant
which can be restricted to this subspace. The restriction turns out to be the
following:
\begin{equation}
\begin{split}
I_{(2,2,2)}(\psi) & = \sum_{i,j,k=0}^{1}|\psi_{ijk}^2|^2+2\sum_{i,j=0}^{1}\left(|\psi_{ij0}\psi_{ij1}|^2+|\psi_{i0j}\psi_{i1j}|^2+|\psi_{0ij}\psi_{1ij}|^2\right)  \\
& + \sum_{i=0}^{1}\left(|\psi_{i01}\psi_{i10}|^2+\frac{1}{3}|\psi_{i01}\psi_{i10}+2\psi_{i00}\psi_{i11}|^2+|\psi_{0i1}\psi_{1i0}|^2\right.  \\
& \left.+\frac{1}{3}|\psi_{0i1}\psi_{1i0}+2\psi_{0i0}\psi_{1i1}|^2+|\psi_{01i}\psi_{10i}|^2+\frac{1}{3}|\psi_{01i}\psi_{10i}+2\psi_{00i}\psi_{11i}|^2\right)  \\
& + \frac{1}{2}|\psi_{000}\psi_{111}|^2+\frac{1}{6}|\psi_{000}\psi_{111}+2\psi_{001}\psi_{110}|^2+\frac{1}{6}|\psi_{000}\psi_{111}+2\psi_{011}\psi_{100}|^2  \\
& + \frac{1}{18}|\psi_{000}\psi_{111}-2\psi_{001}\psi_{110}-2\psi_{011}\psi_{100}|^2  \\
& + \frac{1}{9}|\psi_{000}\psi_{111}+\psi_{001}\psi_{110}+3\psi_{010}\psi_{101}+\psi_{011}\psi_{100}|^2
\end{split}
\end{equation}
where $\psi=\sum_{i,j,k=0}^{1}\psi_{ijk}e_i\otimes e_j\otimes e_k$ is a three-qubit
state. This quantity is invariant under local unitary transformations and permutations
of the three subsystems. Plugging in the coefficients of certain states into this
formula reveals that for a separable state the value is $1$ as expected, for the GHZ-state
it is $\frac{3}{4}$, for the W-state $\frac{7}{9}$, and for a biseparable state
with maximal entanglement shared by two qubits, the value is $\frac{5}{6}$. This
behaviour is in contrast with the well-known three-qubit invariant, Cayley's
hyperdeterminant, which is only sensitive to GHZ-type entanglement.

Note that these values are one minus the quarter of the values of the entanglement
measure proposed in \cite{Meyer} for three qubits, so one is tempted to conjecture
that the restriction of $4-4I_{(2,2,2)}$ equals to their measure. In their paper
this measure appears as a member of an infinite family of multiqubit entanglement
measures. It would be interesting to see whether every member can be found using
our method. If this is the case, then we could generalize them to arbitrary
dimensional single particle states and also to fermionic systems.

A more detailed treatment of this method of constructing entanglement measures
for distinguishable subsystems from fermionic measures along with more examples
can be found in \cite{VL}.

\section{Conclusion}

In this paper a certain class of entanglement measures for fermionic quantum
systems has been introduced and studied. A way to obtain their explicit form
is presented, and it was pointed out that this form is independent of the
dimension of the single particle state space. Some examples are discussed in
detail. The connection to SLOCC-invariants and the case of distinguishable
subsystems is also mentioned.

At this point a number of questions arise. Further study is needed to explore
the behaviour of these local unitary invariants. For instance, are there any
entanglement monotones among them, and is there a way to characterize these?
Could we find the convex roof extension of any of them? Is there a physical
meaning for these quantities, in what way do they measure entanglement?

Also, similar method could be applied directly to quantum systems with
distinguishable particles, or to mixed states. It would be interesting to
see if one could obtain entanglement measures this way which can be useful
in practice.

\section{Appendix: Representation theory of the unitary group}

In this appendix, the basic aspects of the representation theory of the unitary
groups is summarized. These and many more can be found in many textbooks, see
e.g. \cite{FH}.

The unitary group $U(n,\mathbb{C})$ consists of $n\times n$ complex matrices
satisfying $A^{*}=A^{-1}$. Let $T\le U(n,\mathbb{C})$ be the subgroup of
diagonal matrices, $T$ is a maximal torus in $U(n,\mathbb{C})$. We are interested
solely in finite dimensional polynomial representations of $U(n,\mathbb{C})$
which are determined by their characters which are in turn uniquely encoded in
the restriction of the characters to $T$. The characters themselves are symmetric
polynomials in the eigenvalues (the diagonal elements, in the case of $T$).

For example, the action of $U(n,\mathbb{C})$ on the $n$ dimensional vector
space $C^n$ of column vectors by left multiplication is called the standard
representation. Its character is represented by symmetric polynomial
$x_1+x_2+\ldots+x_n$.

A representation $V$ of $U(n,\mathbb{C})$ is also a representation of $T$ whose
irreducible representations are one dimensional, and hence can be considered to
be homomorphisms $T\to\mathbb{C}^{\times}$. An irreducible representation
of $T$ with nonzero multiplicity in $V$ is called a weight. The weight space
corresponding to weight is the union of the subrepresentations in $V$ isomorphic
to a given weight.

Continuing the previous example, the weights of the standard representation are
the representations $\rho_{k}:diag(\lambda_1,\ldots,\lambda_n)\mapsto \lambda_k$,
and the weight space of $\rho_{k}$ is spanned by the $k$th standard basis element.
From now on, the product $\rho_{1}^{r_1}\otimes\ldots\otimes\rho_{n}^{r_n}$ will be
simply denoted by $(r_1,\ldots,r_n)$.

There is the usual partial ordering on the set of weights which we identify with $\mathbb{Z}^n$:
$(r_1,r_2,\ldots,r_n)$ is called positive iff $r_1+\ldots+r_n=0$ and $r_1$, $r_1+r_2$,
\ldots, $r_1+r_2+\ldots+r_{n-1}$ are nonnegative, and $\lambda\ge\mu$ iff $\lambda-\mu$
is positive. The isomorphism class of an irreducible representation of $U(n,\mathbb{C})$
is determined by its highest weight and the weight space for the highest weight
is one dimensional. Using this fact, we can decompose any finite dimensional
representation of $U(n,\mathbb{C})$ into the direct sum of irreducible representations.

The dimension of an irreducible representation of $U(n,\mathbb{C})$ with highest
weight $\lambda=(\lambda_1,\ldots,\lambda_n)$ is
\begin{equation}
\dim S_{(\lambda_1,\ldots,\lambda_{n})}V=\prod_{1\le i<j\le n}\frac{\lambda_i-\lambda_j+j-i}{j-i}
\end{equation}
Note that if we start to increase $n$ and pad $\lambda$ with zeros on the right,
then the dimension in the function of $n$ turns out to be a polynomial of degree
$\lambda_1+\ldots+\lambda_n$. In fact the entries of the representing matrices
are polynomials in the entries of the represented matrix with this same degree.
The symmetric polynomial giving the character of this representation is the
Schur-polynomial.

Calculating symmetric and exterior powers, tensor products, decomposition to
irreducible subrepresentations and many more can be done working only with
symmetric polynomials. This is a crucial fact for our purposes, as a symmetric
polynomial of degree $d$ is determined by its terms containing unknowns only
from a fixed set of $d$ variables. Indeed, no term can contain more than $d$
variables, so the missing ones are obtained by permutations of the variables.
This fact implies that the decomposition of $S^m(\bigwedge^{k}\mathbb{C}^n)$ to
irreducible $U(n,\mathbb{C})$-representations is independent of $n$ when $n\le km$.


\begin{thebibliography}{}
\itemsep-1mm
\bibitem{Dur} W. D\"ur, G. Vidal, and J. I. Cirac, Phys. Rev. {\bf A62} 062314 (2000)
\bibitem{Virmani} M. B. Plenio and S. Virmani, Quant. Inf. Comp. {\bf 7}, 1 (2007)
\bibitem{Horodecki} R. Horodecki, P. Horodecki, M. Horodecki and K. Horodecki, Rev. Mod. Phys. Vol. 81, No. 2, pp. 865-942 (2009)
\bibitem{Bengtsson} I. Bengtsson, K. $\dot{Z}$yczkowski, {\it Geometry of Quantum states}, Cambridge University Press, (2006)
\bibitem{Eckert} K. Eckert, J. Schliemann, D. Bru\ss, M. Lewenstein, Ann. Phys. N. Y. {\bf 299} 88 (2002)
\bibitem{VL}  P. Vrana, P. L\'evay, J. Phys. A:Math. Theor. {\bf 42} 285303 (2009)
\bibitem{Gittings} J. R. Gittings and A. J. Fischer, Phys. Rev. {\bf A66}, 032305 (2002)
\bibitem{Meyer} D. A. Meyer, N. R. Wallach, J. Math. Phys. {\bf 43}, 4237 (2002)
\bibitem{FH}  W. Fulton, J. Harris, \emph{Representation theory: A First Course}, Springer-Verlag (1991)
\end{thebibliography}
\end{document}